\spnewtheorem{assumption}[theorem]{Assumption}{\bfseries}{\itshape}
\spnewtheorem{notation}[theorem]{Notation}{\bfseries}{\itshape}
\newcommand{\qedhere}{\qed}
\tikzstyle{vertex}=[circle,fill=black!10,minimum size=15pt,inner sep=0pt]
\tikzstyle{selected vertex} = [vertex, fill=black!60, text=white]
\tikzstyle{arrow} = [draw,thick,->]
\tikzstyle{weight} = [font=\small]
\tikzstyle{selected edge} = [draw,line width=5pt,-,red!50]
\tikzstyle{ignored edge} = [draw,line width=5pt,-,black!20]
\definecolor{ruddypink}{rgb}{0.88, 0.56, 0.59}
\definecolor{aquamarine}{rgb}{0.5, 1.0, 0.83}
\newcommand{\source}{\mathit{source}}
\newcommand{\target}{\mathit{target}}
\newcommand{\route}{\mathit{rout}}
\newcommand{\nexthop}[1]{\mathit{next}_{#1}}
\newcommand{\nextC}[1]{\mathit{nextC}_{#1}}
\newcommand{\AP}{\mathit{AP}}
\newcommand{\N}{\mathcal{N}}
\newcommand{\KSN}{K_{\N}}
\begin{document}

\title{Deadlock in packet switching networks}

\author{Anna Stramaglia \and
Jeroen J.A. Keiren \and
Hans Zantema}

\institute{Eindhoven University of Technology, the Netherlands\\ 
\email{\{a.stramaglia, j.j.a.keiren, h.zantema\}@tue.nl}}

\maketitle 
\begin{abstract}
A deadlock in a packet switching network is a state in which one or more messages have not yet reached their target, yet cannot progress any further.
We formalize three different notions of deadlock in the context of packet switching networks, to which we refer as global, local and weak deadlock.
We establish the precise relations between these notions, and prove they characterize different sets of deadlocks.
Moreover, we implement checking of deadlock freedom of packet switching networks using the symbolic model checker nuXmv.
We show experimentally that the implementation is effective at finding subtle deadlock situations in packet switching networks.

\keywords{Packet switching network \and Deadlock \and Model checking.}
\end{abstract}

\section{Introduction}\label{section:introduction}
Deadlock is a historically well known bug pattern in computer systems where, in the most general sense, a system reaches a state in which no operation can progress any further.
Deadlocks can occur in many different contexts, such as operating systems~\cite{CES1974}, databases~\cite{Wol1986}, computer networks, and many others~\cite{Zob1983},
provided one interprets the processes and resources involved appropriately. Regardless of the context,  deadlock is a situation that we generally want to avoid.

% In this paper, we study different notions of deadlock in packet switching networks, and apply model checking techniques to, \emph{a priori}, determine whether a packet switching network is deadlock free.

A packet switching network consists of nodes, connected by (directed) channels. Packets are exchanged in a store-and-forward manner. This means that a node in the network first receives a packet in its entirety, and then decides along which output channel to forward the packet based on a routing function. The possible steps in the network are: sending a packet to some other node, processing the packet by first receiving and then forwarding it, and finally, receiving a packet when it reaches its destination node.

Packet switching networks have been around for decades, and the problem of deadlock in such networks was already described early on~\cite{MS1980}. Basically a deadlock arises if packets compete for available channels. There are different ways to deal with deadlocks. First, in deadlock avoidance, extra information in the network is used to dynamically ensure deadlock freedom. Second, in deadlock prevention, deadlock freedom is ensured statically, e.g. based on the network topology and the routing function. Finally, networks with deadlock detection are less restrictive in their routing. Deadlocks that result from these relaxed routing schemes are detected and resolved using an online algorithm~\cite{Chen1974,Lopez2011}.

Many packet switching networks have a dynamic topology, and therefore use deadlock avoidance or deadlock detection. However, from the early 2000s, Networks on Chip (NoCs) brought packet switching and deadlock prevention to the level of interconnect networks in integrated circuits~\cite{BM2002,DT2004}.
%We refer to \cite{DT2004} for an in-depth discussion of interconnection networks.
Since such NoCs have a static topology, they are amenable to deadlock prevention.

Deadlock prevention was studied, e.g., by Chen in 1974~\cite{Chen1974}, who referred to prevention as \emph{``system designs with built-in constraints which guarantee freedom from deadlocks without imposing any constraints on real-time resource allocation''}. Later, in the 1980s, Toueg and Ullman addressed deadlock prevention using local controllers~\cite{TU1981}.
Duato~\cite{Dua1996} was the first one to propose necessary and sufficient conditions for deadlock-free routing in packet switching.
In the context of NoCs, Verbeek~\cite{VS2009,VS2010,Ver2013} formulated a necessary and sufficient condition for deadlock-free routing that is equivalent to that of Duato.
The notion of \emph{local deadlock} we introduce in Section~\ref{subsection:local-deadlock} is equivalent to those of Duato and Verbeek.
This paper is based on preliminary results in~\cite{Str2020}.

\paragraph{Contributions}
In this paper, we focus on \emph{deadlock prevention} in packet switching networks, with a particular interest in NoCs.
We restrict ourselves to networks with deterministic, incremental and node-based routing functions. We formalize three different notions of deadlock, namely global, local and weak deadlock.
The definition of global deadlock is the standard definition in which no message can make progress in the entire network.
A weak deadlock is a state in which no steps other than send steps are possible. A state is a local deadlock if some filled channels are blocked, i.e., they contain a message that can never be forwarded by the target of the channel.
We show that every global deadlock is a weak deadlock, and every weak deadlock is a local deadlock. Furthermore, not every local deadlock is a weak deadlock. However, from a weak deadlock a local deadlock in the same network can be constructed.

Finally, we show how a packet switching network and the deadlock properties can be formalized using nuXmv~\cite{CCD+2014} and CTL~\cite{CE1982}. Our experiments indicate that different types of deadlock are found effectively in packet switching networks. However, verification times out due to the state space explosion when numbers of nodes and channels increase.

\paragraph{Structure of the paper}
In Section~\ref{section:preliminaries} we define packet switching networks and their semantics. Subsequently, in Section~\ref{section:deadlocks} we introduce three different notions of deadlock. Section~\ref{section:expressivity} makes a detailed comparison between these different notions.
In Section~\ref{section:implementation} we describe a translation of packet switching networks and deadlocks into nuXmv and CTL, and describe an experiment with this setup. Conclusions are presented in Section~\ref{section:conclusions}. This paper includes the full proofs of the presented results.

\section{Preliminaries}\label{section:preliminaries}

\subsection{Packet switching network}\label{subsection:packet-switching-networks}

A packet switching network consists of a set of nodes connected by (unidirectional) channels. A subset of the nodes is considered to be terminal.
Any node in the network can receive a message from an incoming channel and forward it to an outgoing channel. Terminal nodes can, furthermore, send messages into the network and receive messages from the network.
When forwarding a message or sending a message, this is always done in accordance with the routing function.
In this paper we consider networks with a static, deterministic routing function. The framework we present could be generalized to a non-deterministic setting. 
Formally, a packet switching network is defined as follows~\cite{DT2004}.

\begin{definition} A \textit{packet switching network} is a tuple $\N = (N,M,C,\route)$ where:
\begin{itemize}
    \item $N$ is a finite set of \textit{nodes},
    \item $M \subseteq N$ is the set of terminals, nodes that are able to send and receive messages, with $|M| \geq 2$,
    \item $C \subseteq N \times N$ is a finite set of \textit{channels}, and
    \item $\route \colon N \times M \to C$ is a (deterministic) routing function.
\end{itemize}
For channel $(n,m) \in C$ we write $\source((n,m)) = n$ and $\target((n,m)) = m$. % for every node $n \in N$ we require there is at least one channel $c \in C$ such that $\source(c) = n$.\jk{Is this requirement on outgoing channels needed, or can we drop it?} 
We require $\source(\route(n,m)) = n$ for every $n \in N$, $m \in M$, with $m \neq n$. We write $c = m$ to denote that channel $c$ contains a message with destination $m$, and write $c = \bot$ to denote channel $c$ is empty. We write $M_\bot$ to denote $M \cup \{ \bot \}$.
\end{definition}
Routing function $\route$ decides the outgoing channel of node $n$ to which messages with destination $m$ should be forwarded.
%If $m = n$, then $\route(n,m)$ is undefined since the message has arrived at its destination.
For $m \in M$, $n \in N$ (with $m \neq n$), the next hop $\nexthop{m} \colon N \to N$ is defined as $\nexthop{m}(n) = \target(\route(n,m))$.

% In the rest of this paper we sometimes use repeated application of functions.
% \begin{notation}\an{remove}
% Let $f \colon D \to D$ be a (potentially partial) function on domain $D$. We write $f^n(x)$ for the $n$-fold application of $f$ to $x$, defined inductively as follows:
% \begin{align*}
%     f^0(x) & = x \\
%     f^{n+1}(x) & = f^n(f(x))
% \end{align*}
% Note that we use the convention that if $f(x)$ is undefined, then $f(f(x))$ is also undefined.
% \end{notation}

A packet switching network is \emph{correct} if, whenever a message is in a channel, the routing function is such that the message can reach its destination in a bounded number of steps. In essence, this means the routing function does not cause any messages to cycle in the network.
\begin{definition}\label{def:correctness}
Let $\N = (N, M, C, \route)$ be a packet switching network. The network is \emph{correct} if for every $m \in M$, and $n \in N$ there exists $k \geq 0$ such that
\[
\nexthop{m}^k(n) = m,
\]
where $\nexthop{n}^0(n) = n$ and $\nexthop{m}^{k+1}(n) = \nexthop{m}^k(\nexthop{m}(n))$.
\end{definition}
% A packet switching network is \textit{correct} if for every $m,m^\prime\in M,\ m \neq m^\prime,$ there exists $k > 0$ such that $next_m^k(m^\prime) = m.$ Thus, if from node $m^\prime$ a message is sent to $m$, this message will arrive at destination $m$ in $k$ steps. In this work, the routing function is chosen in such a way that $k$ is the shortest possible.
In our examples, we typically choose the routing function such that $k$ is minimal,
i.e., the routing function always follows the shortest path to the destination.
In any given state of the network a channel may be free, or it may be occupied by a message.
In the latter case it blocks access to that channel for other messages.
Processing in the network is \emph{asynchronous}, which means that at any moment a step can be done without central control by a clock. The content of a channel is identified by the destination $m \in M$ of the corresponding message.
More precisely, the following steps can be performed in a packet switching network:
\begin{description}
    \item[\emph{Send}] Terminal $m \in M$ can send a message to terminal $m' \in M$ by inserting a message in channel $\route(m,m')$, provided this channel is currently empty. After sending, this channel is occupied by $m'$.
    \item[\emph{Receive}] If channel $c \in C$ with $\target(c) = m$ contains a message with destination $m$, the message can be received by terminal $m$ and $c$ becomes free.
    \item[\emph{Process}] If channel $c \in C$ contains a message with destination $m \in M$, and $\target(c) = n \neq m$ for $n \in N$, then the message can be processed by node $n$ by forwarding it to channel $c' = \route(n,m)$.
    This step can only be taken if channel $c'$ is free. As a result, the message is removed from channel $c$ (which now becomes free) and moved to channel $c'$.
\end{description}
We illustrate these steps in a packet switching network in Example~\ref{exampleSteps}.

%
%A state in which, for a non-empty set of channels, every next channel is blocked in this way, is called a \textit{deadlock}.
%\todo{This definition seems quite imprecise; a key piece of a definition of deadlock is that the deadlock cannot be resolved anymore. That is missing in the current definition. Why is it relevant to introduce deadlock (and the example with the deadlock) at this point in the story?}
%
\begin{figure}[t]
\centering
\begin{subfigure}{.3\textwidth}
  \centering
  \begin{tikzpicture}[scale=2,align=center]
    \foreach \pos/\name in {{(0,1)/1}, {(1,1)/2}, {(1,0)/3},{(0,0)/4}}
        \node[vertex] (\name) at \pos {$\name$};
    \foreach \source/ \dest /\weight in {1/2/c_1=\bot}
        \path[arrow] (\source) -- node[weight,above] {$\weight$} (\dest);
    \foreach \source/ \dest /\weight in {3/4/c_3=\bot}
        \path[arrow] (\source) -- node[weight,below] {$\weight$} (\dest);
    \foreach \source/ \dest /\weight in {2/3/c_2=\bot, 4/1/c_4=\bot}
        \path[arrow] (\source) -- node[weight,left,midway] {$\weight$} (\dest);
    \foreach \vertex in {1,2,3,4}
        \path[vertex] node[selected vertex] at (\vertex) {$\vertex$};
        \end{tikzpicture}
        \caption{Initial state}
        \label{fig:example-steps-psn-init}
\end{subfigure}%
\begin{subfigure}{.3\textwidth}
  \centering
  \begin{tikzpicture}[scale=2,align=center]
   \foreach \pos/\name in {{(0,1)/1}, {(1,1)/2}, {(1,0)/3},{(0,0)/4}}
        \node[vertex] (\name) at \pos {$\name$};
    \foreach \source/ \dest /\weight in {1/2/c_1=3}
        \path[arrow] (\source) -- node[weight,above] {$\weight$} (\dest);
    \foreach \source/ \dest /\weight in {3/4/c_3=\bot}
        \path[arrow] (\source) -- node[weight,below] {$\weight$} (\dest);
    \foreach \source/ \dest /\weight in {2/3/c_2=\bot, 4/1/c_4=\bot}
        \path[arrow] (\source) -- node[weight,left,midway] {$\weight$} (\dest);
    \foreach \vertex in {1,2,3,4}
        \path[vertex] node[selected vertex] at (\vertex) {$\vertex$};
        \end{tikzpicture}
        \caption{send step}
        \label{fig:example-steps-psn-send}
\end{subfigure}%
\begin{subfigure}{.3\textwidth}
  \centering
  \begin{tikzpicture}[scale=2,align=center]
   \foreach \pos/\name in {{(0,1)/1}, {(1,1)/2}, {(1,0)/3},{(0,0)/4}}
        \node[vertex] (\name) at \pos {$\name$};
    \foreach \source/ \dest /\weight in {1/2/c_1=\bot}
        \path[arrow] (\source) -- node[weight,above] {$\weight$} (\dest);
    \foreach \source/ \dest /\weight in {3/4/c_3=\bot}
        \path[arrow] (\source) -- node[weight,below] {$\weight$} (\dest);
    \foreach \source/ \dest /\weight in {2/3/c_2=3, 4/1/c_4=\bot}
        \path[arrow] (\source) -- node[weight,left,midway] {$\weight$} (\dest);
    \foreach \vertex in {1,2,3,4}
        \path[vertex] node[selected vertex] at (\vertex) {$\vertex$};
        \end{tikzpicture}
        \caption{process step}
        \label{fig:example-steps-psn-process}
  \end{subfigure}%
    \caption{A packet switching network with send, process and receive steps}
    \label{fig:example-steps-psn}
\end{figure}
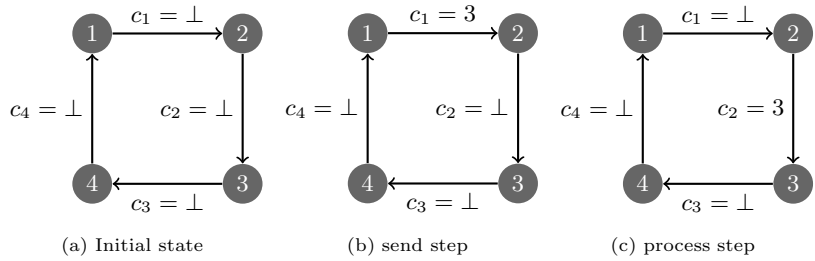
\begin{example}
\label{exampleSteps}
Consider the packet switching network in Figure~\ref{fig:example-steps-psn}.
The network consists of four nodes, i.e., $N = \{1,2,3,4\}$, all of which are terminals, so $M=N$, and four channels, $C= \{c_1,c_2,c_3,c_4\}$, shown as arrows from source to target.
The routing function is $\route(n,m) = c_n$ for all $n \in N$ and $m \in M$.
Initially, all channels are empty, this is shown in Figure~\ref{fig:example-steps-psn-init}. From the initial state it is possible to perform a \emph{send} step from any of the nodes. For example, since channel $c_1 = \bot$, a message can be sent from node $1$ to node $3$. The message is routed to $c_1$. The resulting state is shown in Figure~\ref{fig:example-steps-psn-send}.
Now, $c_1 = 3$ and $c_2 = \bot$, hence node $2$ can perform a \emph{process} step, and forward the message to $c_2$. The resulting situation is shown in Figure~\ref{fig:example-steps-psn-process}.
Finally, since $c_2 = 3$, and $\target(c_2) = 3$, node $3$ can execute a \emph{receive} step, and consume the message from channel $c_2$. Consequently, all channels are empty and the system is back to the initial state shown in Figure~\ref{fig:example-steps-psn-init}.
\end{example}

\subsection{Semantics of packet switching networks}\label{subsection:semantics-of-packet-switching-networks}
We formalize the semantics of packet switching networks using Kripke structures.
\begin{definition}\label{def:kripke-structure}
Let $\AP$ be a set of atomic propositions. A Kripke structure over $\AP$ is a four-tuple $K = (S, I, \to, L)$, where:
\begin{itemize}
  \item $S$ is a (finite) set of states,
  \item $I \subseteq S$ is the set of initial states,
  \item $\to \subseteq S \times S$ is the transition relation, which is total, i.e., for all $s \in S$ there exists $t \in S$ such that $s \to t$, and
  \item $L \colon S \to 2^{\AP}$ is a labelling function that assigns a set of atomic propositions to each state.
\end{itemize}
\end{definition}
In general, the set of states in a Kripke structure may be an overapproximation of the states that can be reached from an initial state.
In this paper we sometimes only consider the \emph{reachable} states of the system.

\begin{definition}\label{def:reachable}
Let $K = (S, I, \to, L)$ be a Kripke structure.
The set of \emph{reachable states} of $K$ is defined as follows:
\[
R(K) = \{ s' \in S \mid \exists s \in I \colon s \to^* s' \}
\]
where $\to^*$ denotes the reflexive transitive closure of $\to$.
% Given a transition system $\Upsilon = (S,\to,I)$, $s,s^\prime \in S$ and the notation $\to^*$, meaning that the transition from the state on the left to the one on the right can be done in zero or more steps, we define the concept of \textit{Reachable states} as:
% \begin{equation*}
%     R = \{ s\ | \ \exists s^\prime \in I : s^\prime \to^* s \}.
% \end{equation*}
\end{definition}

% Transition systems constitute a formalized method used to describe dynamic processes with configurations representing states and transitions saying how to go from state to state.
% A transition system is a mathematical object consisting of a tuple, state space and a binary relation on this space.
% \todo{You have two definitions of transition system; one here, and one in the next section. I would base myself on the one in the next section; right now you do not yet describe the semantics of PSN}.
% \begin{definition}
% A \textit{transition system} $\Upsilon$ is a tuple 
% $\Upsilon = (S, \to)$,
% where $S$ is a set of states, and  $\to\ \subseteq S \times S$ is the transition relation.
% \end{definition}

% In this paper, we use transition systems to provide a formal, semantic model for the behaviour of a packet switching network.

%\subsection{Semantics of packet switching network as transition system}\label{subsection:preliminariesC}
% Consider the structure of a packet switching network from \ref{subsection:preliminariesA} and the definition of a transition system presented in \ref{subsection:preliminariesB}. We formalize the  structure of a transition system for a packet switching network as follows:

We now formalize the semantics of a packet switching network. This captures the intuitions described in Section~\ref{subsection:packet-switching-networks}.

\begin{definition}\label{def:semantics-of-packet-switching-networks}
Given packet switching network $\N = (N, M, C, \route)$, its semantics is defined as the Kripke structure $\KSN = (S, I, \to, L)$ over $\AP = \{ c = m \mid c \in C \land m \in M_\bot \}$, defined as follows:
\begin{itemize}
    \item $S = M_{\bot}^{|C|}$, i.e., the state of the network is the content of its channels. If $C = \{ c_1, \ldots, c_{|C|} \}$ we write $\pi_{c_i}(s) = v_i$ if $s = (v_1,\ldots,v_{|C|}) \in S$,
    \item $I = \{ s \in S \mid \forall c \in C \colon \pi_c(s) = \bot \}$, i.e., initially all channels are empty,
    %\subseteq S$ as the set of initial states of the packet switching network where $\forall i \in C: v_{i}=\bot$ and $v_i$, for $1 \leq i \leq |C|$, is the value of channel $i$ in the packet switching network, %\todo{This is just an informal explanation. You can specify the state here, I think (all channels empty?)}
    \item transition relation $\to \subseteq S \times S$ is $\to_s \cup \to_p \cup \to_r$, where
    \begin{itemize}
    \item $\to_s$ is the least relation satisfying
    \[
    \frac
      {m,m' \in M \quad m \neq m' \quad c = \route(m,m') \quad v_{c} = \bot}
      {(v_1,\ldots,v_{c},\ldots, v_{|C|}) \to_s (v_1,\ldots,m',\ldots,v_{|C|})}
    \]
    characterising that terminal $m$ sends a message to terminal $m'$,
    \item $\to_p$ is the least relation satisfying
    \[
    \frac
      {m \in M \quad v_c = m \quad \target(c) = n \quad \route(n,m) = c' \quad v_{c'} = \bot
    %   n, n' \in N \quad m \in M \quad n \neq m \quad n = \target(c) \quad c' = \route(n,m) \quad n' = \target(c') \quad v_{c} = m \quad \ \quad n \neq n' \neq m \quad \quad v_{\route(n,m)} = m \quad v_{\route(n',m)} = \bot
      }
      {(v_1,\ldots, v_{c}, \ldots, v_{c'}, \ldots, v_{|C|}) \to_p (v_1,\ldots, \bot, \ldots, m, \ldots, v_{|C|})}
    \]
    characterising that node $n$ forwards a message with destination $m$ that comes in on channel $c$ to channel $c'$, and
    \item $\to_r$ is the least relation satisfying
     \[
    \frac
      {m \in M \quad v_c = m \quad \target(c) = m}
     % n \in N \quad m \in M \quad n \neq m \quad v_{\route(n,m)} = m}
      {(v_1,\ldots,v_{c},\ldots, v_{|C|}) \to_r (v_1,\ldots,\bot,\ldots,v_{|C|})}
    \]
    characterising that terminal $m$ receives a message along its incoming channel $c$.
    \end{itemize}
    
    \item $L(s)= \bigcup_{c \in C}\{c=m \mid \pi_{c}(s) = m \}$, for every $s \in S$.
\end{itemize}
Note that it is straightforward to show that $\to_s$, $\to_p$ and $\to_r$ are pairwise disjoint. We sometimes write, e.g., $\to_{pr}$ instead of $\to_p \cup \to_r$.
We write $\not \to_X$ if there is no $s' \in S$ such that $s \to_X s'$ for $x \subseteq \{s,p,r\}$.
To ensure that the transition relation is total, we extend $\to$ with transitions $s \to s$  whenever $s \not \to_{spr}$.
% \an{s to s' ?} No; we add self-loops when there is no step possible, like the TRUE case in the nuXmv model.
\end{definition}

\section{Deadlocks}\label{section:deadlocks}

The key question about packet switching we are interested in is whether a network is deadlock free.
Intuitively, a network contains a deadlock if a message is stuck in a channel, and it will never be processed or received by the target of the channel.
In practice, we can distinguish different notions of deadlock, each of which has a different interpretation of this informal requirement.
We introduce three such notions, and study the relation between them.

\subsection{Global deadlock}\label{subsection:global-deadlock}

Typically a global deadlock is a state that has no outgoing transitions.
However, since we are dealing with Kripke structures, which have a total transition relation, every state has an outgoing transition. A \emph{global deadlock} is, therefore, a state that has no outgoing transitions to a state other than itself.

\begin{definition}\label{def:globaldeadlock}
Let $K = (S, I, \to, L)$ be a Kripke structure.
The set of global deadlock states in $K$ is defined as:
\[
G(K) = \{ s \in S \mid \nexists s' \in S \colon s \neq s' \land s \to s' \}
\]
When $s \in G(K)$, we say that $s$ is a global deadlock.
\end{definition}

\begin{example}%{Global deadlock state:}
\label{example:global-deadlock}
Recall the packet-switching network from Example~\ref{exampleSteps}. The situation in which all nodes have sent a message two hops away is shown on the right.
\noindent
\begin{minipage}{.70\textwidth}
All channels contain a value $m \in M$, but none of them can make progress because the next hop is blocked by another message.
For instance, message $3$ in $c_1$ has to reach node $3$, but $\route(2,3)=c_2$ is blocked by message $4$. There is a cycle of blocked channels, where all of them are filled, hence the network is in a global deadlock.
\end{minipage}%
\hfill
\begin{minipage}{.29\textwidth}
\begin{tikzpicture}[scale=2]
    \foreach \pos/\name in {{(0,1)/1}, {(1,1)/2}, {(1,0)/3},{(0,0)/4}}
        \node[vertex] (\name) at \pos {$\name$};
    \foreach \source/ \dest /\weight in {1/2/c1=3}
        \path[arrow] (\source) -- node[weight,above] {$\weight$} (\dest);
    \foreach \source/ \dest /\weight in {3/4/c3=1}
        \path[arrow] (\source) -- node[weight,below] {$\weight$} (\dest);
    \foreach \source/ \dest /\weight in {2/3/c2=4, 4/1/c4=2}
        \path[arrow] (\source) -- node[weight,left] {$\weight$} (\dest);
    \foreach \vertex in {1,2,3,4}
        \path[vertex] node[selected vertex] at (\vertex) {$\vertex$};
\end{tikzpicture}
\end{minipage}
\end{example}

The semantics of packet switching networks guarantees that there are no global deadlocks among the initial states. %This is formalized in the following lemma.
\begin{lemma}\label{lemma:initial-global}
Let $\N = (N, M, C, \route)$ be a packet switching network with $\KSN = (S, I, \to, L)$ its semantics. Then $ I \cap G(\KSN) = \emptyset $
\end{lemma}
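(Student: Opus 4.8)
The plan is to show the contrapositive at the level of individual states: every initial state admits an outgoing transition to a \emph{different} state, and is therefore excluded from $G(\KSN)$. Since the claim is about a set intersection being empty, it suffices to fix an arbitrary $s \in I$ and exhibit a witness $s' \neq s$ with $s \to s'$.

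The key observation is that the \emph{Send} rule is always enabled from an empty configuration. First I would use the hypothesis $|M| \geq 2$ to pick two distinct terminals $m, m' \in M$ with $m \neq m'$. Since $M \subseteq N$, the routing function $\route(m,m')$ is defined and yields some channel $c = \route(m,m') \in C$. Because $s \in I$, by definition $\pi_c(s) = \bot$, so the premise $v_c = \bot$ of the $\to_s$ rule is satisfied, along with the side conditions $m, m' \in M$ and $m \neq m'$. Hence the rule fires, giving $s \to_s s'$ where $s'$ agrees with $s$ on every channel except that $\pi_c(s') = m'$.

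It then remains to check that $s' \neq s$. This is immediate: $\pi_c(s') = m' \in M$, whereas $\pi_c(s) = \bot$, and $m' \neq \bot$ since $M_\bot = M \cup \{\bot\}$ is a disjoint union with $m' \in M$. Therefore $s$ has an outgoing transition $s \to s'$ with $s \neq s'$, so by Definition~\ref{def:globaldeadlock} we have $s \notin G(\KSN)$. As $s \in I$ was arbitrary, no initial state is a global deadlock, i.e. $I \cap G(\KSN) = \emptyset$.

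I do not anticipate a genuine obstacle here; the argument is a direct application of the send rule. The only point deserving care is the appeal to $|M| \geq 2$, which is exactly what guarantees that a valid source/destination pair $(m,m')$ with $m \neq m'$ exists so that the premises of $\to_s$ can be met; without it the send rule could be vacuously unavailable. Everything else is routine unfolding of the definitions of $I$, $\to_s$, and $G(\KSN)$.
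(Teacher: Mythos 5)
Your proposal is correct and is essentially the paper's own argument, just spelled out in full detail: the paper's proof is the one-line observation that $|M| \geq 2$ and the initial emptiness of all channels together enable a send step, which is exactly the witness transition you construct. No gaps; the extra care about $s' \neq s$ via $m' \neq \bot$ is a fine (if routine) addition.
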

\begin{proof}
Since $|M| \geq 2$, and all channels are initially empty, there is a terminal node that can send a message into the network. \qedhere 
\end{proof}
% \begin{proof}
% We have to prove for all $s \in I$ that $s \not \in G(K)$. Fix arbitrary $s \in I$. Note that by the definition of $I$, for all $c \in C$, $\pi_c(s) = \bot$. Furthermore, since $|M| \geq 2$, there exist $m, m' \in M$ such that $m \neq m'$. By definition there is at least one channel $c \in C$ such that $\source(c) = m$. Now, according to the rule for \emph{send}, there is a transition to the state $s'$ with $\pi_c(s') = \bot$ for all $c \neq \route(m,m')$, and $\pi_{\route(m,m')} = m'$. Hence $s \to s'$ and $s \neq s'$, and therefore $s \not \in G(K)$. \qedhere
% \end{proof}

\subsection{Local deadlock}\label{subsection:local-deadlock}

Even if not all of the channels in a packet switching network are blocked, it can happen that a subset of the channels is deadlocked. Such a situation is not covered by the global deadlock.
We therefore introduce the \emph{local deadlock}.
Intuitively, a state is a local deadlock if it has a channel that indefinitely contains the same message.

\begin{definition}\label{def:localdeadlock}
Let $\N = (N, M, C, \route)$ be a packet switching network, and $\KSN = (S, I, \to, L)$ its semantics.
The set of local deadlock states in $\KSN$ in which channel $c \in C$ is deadlocked is defined as:
\[
L_c(\KSN) = \{ s \in S \mid \forall s' \in S \colon s \to^* s' \implies \pi_c(s) \neq \bot \land \pi_c(s') = \pi_c(s) \}
\]
The set of local deadlock states is defined as:
\[
L(\KSN) = \bigcup_{c \in C} L_c(\KSN)
\]
\end{definition}

We illustrate the local deadlock in the following example.

\begin{example}
\label{example:local-deadlock}
Consider the packet switching network with $N=M=\{1,2,3,4\}$ and $C=\{c_1,c_2,c_3,c_4,c_5\}$ shown on the right. The routing function $\route(n,m) = c_5$ if $n = 3$ and $m = 2$, and $c_n$ otherwise. None of the messages in channels $c_1$, $c_2$, $c_3$ and $c_4$ can make another step because the next hop is blocked. For instance, message $4$ in $c_2$ has to reach node $4$, but $\route(3,4)=c_3$ is blocked by message~$1$. 

\noindent
\begin{minipage}{.69\textwidth}
Channel $c_5$, by definition of the routing function, is only used in case node $3$ sends a message to node $2$, $rout(3,2)=c5$. Therefore, node $3$ can still send such a message (which can be received by node $2$ immediately afterwards). Thus, these two steps will always be possible, even if all of the other channels are deadlocked.
\end{minipage}%
\hfill
\begin{minipage}{.3\textwidth}
\begin{tikzpicture}[scale=2]
    \foreach \pos/\name in {{(0,1)/1}, {(1,1)/2}, {(1,0)/3},{(0,0)/4}}
        \node[vertex] (\name) at \pos {$\name$};
    \foreach \source/ \dest /\weight in {1/2/c1=3}
        \path[arrow] (\source) -- node[weight,above] {$\weight$} (\dest);
    \foreach \source/ \dest /\weight in {3/4/c3=1}
        \path[arrow] (\source) -- node[weight,below] {$\weight$} (\dest);
    \foreach \source/ \dest /\weight in {4/1/c4=2}
        \path[arrow] (\source) -- node[weight,left] {$\weight$} (\dest);
    \foreach \source/ \dest /\weight in {2/3/c2=4}
        \path[arrow] (\source) edge[bend right] node[weight,left] {$\weight$} (\dest);
    \foreach \source/ \dest /\weight in {3/2/c5}
        \path[arrow] (\source) -- node[weight,right] {$\weight$} (\dest);
    \foreach \vertex in {1,2,3,4}
        \path[vertex] node[selected vertex] at (\vertex) {$\vertex$};
    \end{tikzpicture}
\end{minipage}
\end{example}

\subsection{Weak deadlock}\label{subsection:weak-deadlock}

Local deadlock does not distinguish between sending a new message---which is always possible if the target channel is empty---, and processing or receiving a message. In this section, we introduce the notion of \emph{weak deadlock}. A state is a weak deadlock if no \emph{receive} or \emph{process} step is possible in that state.

Before defining weak deadlock, we first observe that in the initial states of a Kripke structure representing a packet-switching network, trivially no \emph{process} or \emph{receive} step is possible.

\begin{lemma}\label{lem:initial-state-no-pr-transition}
Let $\N = (N, M, C, \route)$ be a packet switching network, and $\KSN = (S, I, \to, L)$ its semantics. Then $ \forall s \in I \colon s \centernot \to_{pr} $
\end{lemma}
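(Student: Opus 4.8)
The plan is to prove the statement directly from the definitions, by unfolding what it means for a state to be initial and showing that this immediately rules out the premises of both the $\to_p$ and $\to_r$ inference rules. Recall that $s \centernot\to_{pr}$ abbreviates $s \centernot\to_p$ and $s \centernot\to_r$, i.e., there is no $s' \in S$ with $s \to_p s'$ or $s \to_r s'$. Since $I = \{ s \in S \mid \forall c \in C \colon \pi_c(s) = \bot \}$, fixing an arbitrary $s \in I$ gives us the single structural fact that every channel is empty in $s$, and the whole argument will hinge on this.

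First I would handle the \emph{process} step. Suppose, for contradiction, that $s \to_p s'$ for some $s'$. By the inference rule defining $\to_p$, this requires a channel $c$ with $v_c = m$ for some $m \in M$ (in our notation $\pi_c(s) = m$). But $m \in M \subseteq N$, so $m \neq \bot$, contradicting $\pi_c(s) = \bot$. Hence no process transition is enabled. The \emph{receive} step is entirely analogous: the rule for $\to_r$ likewise demands a channel $c$ with $v_c = m$ for $m \in M$, which again conflicts with $\pi_c(s) = \bot$. Therefore neither kind of transition can fire from $s$, giving $s \centernot\to_{pr}$, and since $s \in I$ was arbitrary the universally quantified claim follows.

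The argument is genuinely short, so there is no substantial obstacle; the only thing requiring any care is purely notational. The inference rules are stated using the component notation $v_c$ for the content of channel $c$ in the source tuple, whereas the definition of $I$ is phrased using the projection $\pi_c$. I would therefore note at the outset that $\pi_c(s)$ and the $v_c$ appearing in the premises refer to the same value, so that ``all channels empty'' is exactly the negation of the enabling condition $v_c = m$ for $m \in M$. Once that identification is made explicit, both cases are immediate, and it is worth remarking that the self-loop extension $s \to s$ added to totalise $\to$ does not interfere, since those added transitions belong to neither $\to_p$ nor $\to_r$.
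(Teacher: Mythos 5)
Your argument is correct and is essentially the paper's own proof, just spelled out in more detail: the paper likewise observes that all channels are initially empty and that the premises of $\to_p$ and $\to_r$ each require some channel to hold a message $m \in M$, which is impossible. The extra care about identifying $v_c$ with $\pi_c(s)$ and about the totalising self-loops is fine but not a different approach.
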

\begin{proof}
Initially all channels are empty. The result then follows immediately from the definitions of $\to_r$ and $\to_p$. \qedhere
\end{proof}
% \begin{proof}
% Fix $s \in I$. Note that by the definition of $I$, for all $c \in C$, $\pi_c(s) = \bot$. It then immediately follows from the definition of $\to$ that there are no \emph{process} or \emph{receive} transitions, since both rules require at least one $c \in C$ such that $\pi_c(s) \neq \bot$.
% \end{proof}

% \begin{assumption}
% From the initial states, where $\forall i\ | \ v_i = 0$, it is not possible to perform a $\to_2$ step because it requires $v_i \neq 0$, and this resolves in the following: $I \subseteq G_{\to_2}$.
% \end{assumption}

Because of this observation, we explicitly exclude the initial states. The definition of weak deadlocks is as follows.
\begin{definition}\label{def:weakdeadlock}
Let $\N = (N, M, C, \route)$ be a packet switching network, and $\KSN = (S, I, \to, L)$  its semantics. The set of weak deadlocks is defined as:
\[
%W(\KSN) = \{ s \in S \setminus I \mid \nexists s' \in S \colon s \neq s' \land s \to_{pr} s' \}
W(\KSN) = \{ s \in S \setminus I \mid s \not \to_{pr} \}
\]
\end{definition}
\begin{example}\label{example:weak-deadlock}
Consider the packet switching network with $N = M = \{1,2,3,4\}$ and $C=\{c_1,c_2,c_3,c_4,c_5\}$ shown on the right. The routing function $\route(n,m) = c_5$ if $n = 2$ and $m = 1$, and $c_n$ otherwise. None of the messages in $c_1, c_2, c_3, c_4$ can reach its destination because the next hop is blocked. For instance, message $2$ in $c_4$ has to reach node $2$, but $\route(1,2)=c_1$ is blocked by message $3$.

\smallskip
\noindent
\begin{minipage}{.72\textwidth}
Channel $c_5$, by definition of the routing function, is used only when node $2$ sends a message to node $1$, $rout(2,1)=c_5$.
This means $c_5$ can be filled with value $1$, after which it can be received immediately by node $1$. Thus, node $2$, through channel $c_5$, will always be able to send messages to node $1$, but in this particular configuration no \emph{process} or \emph{receive} step is possible. Hence, this situation is a weak deadlock.
\end{minipage}
\hfill
\begin{minipage}{.28\textwidth}
\begin{tikzpicture}[scale=2,auto]
    \foreach \pos/\name in {{(0,1)/1}, {(1,1)/2}, {(1,0)/3},{(0,0)/4}}
        \node[vertex] (\name) at \pos {$\name$};
    \foreach \source/ \dest /\weight in {1/2/c_1=3}
        \path[arrow] (\source) -- node[weight,above] {$\weight$} (\dest);
    \foreach \source/ \dest /\weight in {3/4/c_3=1}
        \path[arrow] (\source) -- node[weight,below] {$\weight$} (\dest);
    \foreach \source/ \dest /\weight in {2/1/c_5}
        \path[arrow] (\source) edge[bend left] node[weight,below] {$\weight$} (\dest);
    \foreach \source/ \dest /\weight in {2/3/c_2=4, 4/1/c_4=2}
        \path[arrow] (\source) -- node[weight,left] {$\weight$} (\dest);
    \foreach \vertex in {1,2,3,4}
        \path[vertex] node[selected vertex] at (\vertex) {$\vertex$};
    \end{tikzpicture}
\end{minipage}
\end{example}
%\subsection{Reachable states}
%\todo{JK: I have trouble understanding this bit about reachable states; what is the purpose at this point in the treatment?}
%It is our goal to understand whether a deadlock (global, local or weak) situation is reachable and, if it is reachable, which is the counterexample that proves it with the implementation of formal verification methods.

%Considering the definition of global deadlock, Definition \ref{def:globaldeadlock}, a desired property is
%\begin{equation*}
%    G \cap R = \emptyset
%    \label{empty}
%\end{equation*}
%namely that the intersection of the set of deadlock states and the one of reachable states is empty, the network is deadlock-free.

%It can be the case that, in a packet switching network, it is possible to find a global deadlock which is not reachable (see Example \ref{example1}).

%In this work we refer and search for reachable deadlocks, the ones that represent the real possible behaviour of the system. With the implementation of formal verification methods we can obtain a counterexample that proves the effective presence of a standstill situation. 

\section{Expressivity of different notions of deadlock}\label{section:expressivity}

In this section we compare the different notions of deadlock introduced in the previous section. We first relate global deadlocks to local and weak deadlocks, and ultimately we investigate the relation between local and weak deadlocks.

\subsection{Comparing global deadlocks to local and weak deadlocks}

It is not hard to see that every global deadlock is both a local deadlock and a weak deadlock. Furthermore, neither local nor weak deadlocks necessarily constitute a global deadlock.

We first formalize this for local deadlocks in the following lemma.
\begin{lemma}\label{lem:global-vs-local-deadlock}
Let $\N = (N, M, C, \route)$ be a packet switching network, and $\KSN = (S, I, \to, L)$ its semantics. Then we have:
\[
  G(\KSN) \subseteq L(\KSN)
 \]
\end{lemma}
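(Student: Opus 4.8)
The plan is to show that any global deadlock state $s$ must lie in $L_c(\KSN)$ for at least one channel $c$, which by definition of $L(\KSN)$ as the union $\bigcup_{c \in C} L_c(\KSN)$ suffices. The intuition is that a global deadlock is a state frozen in place — it has no outgoing transitions except the self-loop — so every channel's content is literally constant along all reachable paths, and at least one channel must be non-empty.

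First I would unpack $s \in G(\KSN)$. By Definition~\ref{def:globaldeadlock}, $s$ has no outgoing transition to a distinct state, so none of the $\to_s$, $\to_p$, $\to_r$ steps can change $s$ (the only transition out of $s$ is the $s \to s$ self-loop added to totalize the relation). A key consequence is that $s \neq s'$ for any $s'$ reachable from $s$ forces $s' = s$: indeed, since the only genuine transition is the self-loop, one shows by induction on the length of a path that $s \to^* s'$ implies $s' = s$. This pins down the reachable set from $s$ to be exactly $\{s\}$.

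Next I would establish that some channel is non-empty in $s$. This is where I would lean on the \emph{correctness} of the network (Definition~\ref{def:correctness}) together with Lemma~\ref{lemma:initial-global}: if every channel were empty, then $s \in I$, but $I \cap G(\KSN) = \emptyset$, contradicting $s \in G(\KSN)$. Hence there exists a channel $c$ with $\pi_c(s) = m \neq \bot$ for some $m \in M$.

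Finally I would verify $s \in L_c(\KSN)$ for this $c$. Take any $s'$ with $s \to^* s'$; by the paragraph above, $s' = s$, so $\pi_c(s') = \pi_c(s) \neq \bot$ and $\pi_c(s) = \pi_c(s')$, which is precisely the defining condition of $L_c(\KSN)$. Therefore $s \in L_c(\KSN) \subseteq L(\KSN)$, completing the argument. The main obstacle I anticipate is the bookkeeping in the induction that the reachable set from a global deadlock collapses to a single state — in particular being careful that the artificial self-loop is the only escape and that it does not introduce a fresh state — but this is a short argument once the totalization convention from Definition~\ref{def:semantics-of-packet-switching-networks} is invoked explicitly.
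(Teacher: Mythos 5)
Your proof is correct and follows essentially the same route as the paper's one-line argument: unfold Definition~\ref{def:globaldeadlock}, conclude membership in some $L_c(\KSN)$, and use that $L(\KSN)$ is the union of these. In fact you are slightly more careful than the paper, which asserts $G(\KSN) \subseteq L_c(\KSN)$ for \emph{every} channel $c$ --- a claim that fails for any channel that happens to be empty in the deadlocked state, since Definition~\ref{def:localdeadlock} requires $\pi_c(s) \neq \bot$; your use of Lemma~\ref{lemma:initial-global} to extract one non-empty channel is exactly the repair needed. One minor remark: correctness of the network plays no role here --- only $|M| \geq 2$, via Lemma~\ref{lemma:initial-global}, is used.
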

\begin{proof}
From the Definitions~\ref{def:globaldeadlock} and~\ref{def:localdeadlock} it follows immediately that for all $c \in C$, $G(\KSN) \subseteq L_c(\KSN)$, hence $G(\KSN) \subseteq \bigcup_{c \in C} L_c(\KSN) = L(\KSN)$. \qedhere
\end{proof}

It is not generally the case that $L(\KSN) \subseteq G(\KSN)$. This follows immediately from Example~\ref{example:local-deadlock}, which shows a packet-switching network with a local deadlock that is not a global deadlock.
For weak deadlocks, similar results hold as formalized by the following lemma.

\begin{lemma}\label{lem:global-vs-weak-deadlock}
Let $\N = (N, M, C, \route)$ be a packet switching network, and $\KSN = (S, I, \to, L)$ its semantics. Then we have
\[
G(\KSN) \subseteq W(\KSN)
\]
\end{lemma}
\begin{proof}
% Fix $s \in G(\KSN)$. Note that $\nexists s' \in S \colon s \neq s' \land s \to s'$ according to Definition~\ref{def:globaldeadlock}. Since $\to_{pr} \subseteq \to$, also $\nexists s' \in S \colon s \neq s' \land s \to_{pr} s'$, hence according to Definition~\ref{def:weakdeadlock}, $s \in W(\KSN)$. So, $G(\KSN) \subseteq W(\KSN)$. \qedhere
Fix $s \in G(\KSN)$. Note that $\nexists s' \in S \colon s \neq s' \land s \to s'$ according to Definition~\ref{def:globaldeadlock}.
Towards a contradiction, suppose $s \to_{pr} s'$ for some $s'$. It follows from the definition of $\to_{pr}$ that $s \neq s'$, and since $\to_{pr} \subseteq \to$, this is a contradiction. So, $s \not \to_{pr}$. Hence according to Definition~\ref{def:weakdeadlock}, $s \in W(\KSN)$. So, $G(\KSN) \subseteq W(\KSN)$. \qedhere
\end{proof}
Again, the converse does not necessarily hold.
This follows immediately from Example~\ref{example:weak-deadlock}, which shows a packet switching network with a weak deadlock that is not a global deadlock.

\subsection{Comparing local deadlocks to weak deadlocks}

Now that we have shown that local and weak deadlocks are not necessarily global deadlocks, the obvious question is how local and weak deadlocks are related. In particular, what we show in this section is that there is a local deadlock in a packet switching network if, and only if, there is a weak deadlock in the network.

Before we prove this main result, we first present several lemmata supporting the proof.
% In the rest of this section, fix a packet switching network $(N, M, C, \route)$ and the Kripke structure $K = (S, I, \to L)$ representing its semantics. Note that we use that $\to = \to_s \uplus \to_{pr}$.
%
% We first show that if in a state $s$, a channel $c$ contains a message that should be routed to channel $c'$, but state $s$ does not have any outgoing process or receive transitions, then $c'$ is occupied in state $s$. 
%
% \todo[color=ruddypink]{I don't think we refer to them}
% \begin{lemma}\label{lemma5}
% For all states $s \in S$ and channels $c, c' \in C$ such that $\route(\target(c), \\ \pi_c(s)) = c'$, if $s \not \to_{p}$ and $\pi_c(s) \neq \bot$ then $\pi_{c'}(s) \neq \bot$.
% \end{lemma}
% \begin{proof}
% Fix $s \in S$ and $c, c' \in C$ and assume $\route(\target(c),\pi_c(s)) = c'$, $s \not \to_{p}$ and $\pi_c(s) \neq \bot$. Towards a contradiction, suppose $\pi_{c'}(s) = \bot$, then according to the process rule, $s \to s'$ such that for all $c'' \in C \setminus \{ c, c' \}$, $\pi_{c''}(s) = \pi_{c''}(s')$, $\pi_{c}(s') = \bot$ and $\pi_{c'}(s') = \pi_{c}(s)$, and since it is a process transition, $s \to_{p} s'$. Contradiction. Hence $\pi_{c'}(s) \neq \bot$. \qedhere
% \end{proof}
%
% A send transition does not modify the content of channels that were already occupied.
% \begin{lemma}\label{lem:send-does-not-change-occupied}
% For all states $s, s' \in S$ and channels $c \in C$, if $s \to_s s'$ and $\pi_c(s) \neq \bot$ then $\pi_c(s) = \pi_c(s')$.
% \end{lemma}
% \begin{proof}
% Follows immediately from the semantics of a packet switching network. \qedhere
% \end{proof}
%
First, in subsequent results we have to reason about the number of \emph{process} and \emph{receive} transitions that can be taken from a particular state, provided that no \emph{send} transitions are taken.
To this end, we first formalize the number of steps required to reach the destination for message $m$ in channel $c$.

\begin{definition}\label{def:N}
Let $c \in C$ be a channel, and $m \in M_{\bot}$ the destination of the message carried by the channel.
\[
N(c,m) = \begin{cases}
    0 & \text{if}\ m = \bot \\
    1 & \text{if}\ m = \target(c) \\
    1 + N(\route(\target(c),m), m) & otherwise
\end{cases}
\]
\end{definition}

% For correct packet switching networks, since the routing function is cycle free, we know that $N$ is well-defined.

To determine that $N$ is well-defined for correct packet switching networks, we first introduce $\nextC{m}$, that, in a similar way to $\nexthop{m}$ counts the number of channels that needs to be traversed for message $m$ in channel $c$ to reach its destination.

\begin{definition}\label{def:nextC}
Let $\N = (N, M, C, route)$ be a packet switching network, with 
$c \in C$ and $m \in M$ such that channel $c$ contains message $m$.
The next channel for $m$ in $c$ is defined as follows:
\[
\nextC{m}(c) = \route(\target(c),m).
\]
\end{definition}
The relation between $\nextC{m}$ and $\nexthop{m}$, is formalized in the following lemma.

\begin{lemma}\label{lem:next-vs-nextC}
Let $\N = (N, M, C, route)$ be a packet switching network.
For all $m \in M$, $k \geq 0$, $n \in N$ and $c \in C$ such that $\target(c) = n$, we have
\[
\nexthop{m}^k(n) = \target(\nextC{m}^k(c))
\]
\end{lemma}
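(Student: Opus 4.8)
The plan is to prove the identity
\[
\nexthop{m}^k(n) = \target(\nextC{m}^k(c))
\]
by induction on $k$, using the fact that $\target(c) = n$ as the base case and the definitions of $\nexthop{m}$ and $\nextC{m}$ to drive the inductive step. Both sides are defined as $k$-fold compositions, so the natural proof strategy is a straightforward induction on the exponent.

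For the base case $k = 0$, I would unfold both sides using the conventions $\nexthop{m}^0(n) = n$ and $\nextC{m}^0(c) = c$. The left-hand side is simply $n$, and the right-hand side is $\target(c)$, which equals $n$ by the hypothesis $\target(c) = n$. So the two sides agree.

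For the inductive step, I assume the statement holds for $k$ and prove it for $k+1$. I would start from the right-hand side $\target(\nextC{m}^{k+1}(c))$ and rewrite $\nextC{m}^{k+1}(c) = \nextC{m}(\nextC{m}^k(c))$. Writing $c' = \nextC{m}^k(c)$, the definition of $\nextC{m}$ gives $\nextC{m}(c') = \route(\target(c'), m)$, so that $\target(\nextC{m}(c')) = \target(\route(\target(c'), m)) = \nexthop{m}(\target(c'))$, where the last equality is exactly the definition $\nexthop{m}(p) = \target(\route(p,m))$. Applying the induction hypothesis to $c'$ yields $\target(c') = \target(\nextC{m}^k(c)) = \nexthop{m}^k(n)$, and substituting gives $\nexthop{m}(\nexthop{m}^k(n)) = \nexthop{m}^{k+1}(n)$, which is the left-hand side. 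This closes the induction.

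The main subtlety to watch is bookkeeping about which channel the induction hypothesis is applied to: the hypothesis is stated for an arbitrary channel $c$ with $\target(c) = n$, so I must make sure that at each stage the channel $\nextC{m}^k(c)$ still carries message $m$ and that $\nexthop{m}$ is being evaluated at the target of that channel. Correctness of the network (Definition~\ref{def:correctness}) is not strictly needed for this purely algebraic identity — it only guarantees that the iterates eventually reach $m$ — but I should confirm that $\route$ is well-defined at each step, i.e. that the arguments $\target(\nextC{m}^k(c))$ remain distinct from $m$ wherever $\nexthop{m}$ and $\nextC{m}$ are applied. Since the identity is stated as an equality of the two iterates without a termination claim, the induction goes through cleanly once this alignment of the channel and its target is made explicit.
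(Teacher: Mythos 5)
Your proof is correct and takes essentially the same approach as the paper: induction on $k$, with the base case given by $\target(c)=n$ and the inductive step by unfolding the definitions of $\nexthop{m}$ and $\nextC{m}$. The only cosmetic difference is that you peel off the \emph{outermost} application of $\nextC{m}$ and apply the induction hypothesis to the original pair $(n,c)$, whereas the paper peels off the \emph{innermost} application and applies the (universally quantified) hypothesis to the successor pair $(\nexthop{m}(n),\route(n,m))$; both are valid.
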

\begin{proof}
Fix arbitrary $m \in M$.
We proceed by induction on $k$.
\begin{itemize}
    \item $k = 0$. Fix $n \in N$ and $c \in C$ such that $\target(c) = n$. Then $\nexthop{m}^{0}(n) = n$ by definition of $\nexthop{m}$. Since $\target(c) = n$, and by definition of $\nextC{m}^0(c) = c$, we find that $n = \target(\nextC{m}^0(c))$.
    
    \item $k = l + 1$. Fix $n \in N$ and $c \in C$ such that $\target(c) = n$. As induction hypothesis, assume that for all $n \in N$ and $c \in C$ such that $\target(c) = n$, 
    $\nexthop{m}^l(n) = \target(\nextC{m}^l(c))$.
    We derive as follows:
    \begin{align*}
    \nexthop{m}^{l+1}(n)
    & = \nexthop{m}^{l}(\nexthop{m}(n))
    && \text{Definition of $\nexthop{m}^{l}$}
    \\
    & = \nexthop{m}^{l}(\target(\route(n,m)))
    && \text{Definition of $\nexthop{m}$}
    \\
    & = \target(\nextC{m}^l(\route(n,m)))
    && \text{Induction hypothesis}
    \\
    & = \target(\nextC{m}^l(\route(\target(c),m)))
    && \text{$n = \target(c)$}
    \\
    & = \target(\nextC{m}^l(\nextC{m}(c)))
    && \text{Definition of $\nextC{m}$}
    \\
    & = \target(\nextC{m}^{l+1}(c))
    && \text{Definition of $\nextC{m}^{l+1}$}
    \end{align*}
    \qedhere
\end{itemize}
\end{proof}
In essence, $N(c,m)$ is an inductive characterization of $\nextC{m}(c)$. This correspondence is formalized as follows.

\begin{lemma}\label{lem:nextC-vs-N}
Let $\N = (N,M,C, \route)$ be a correct packet switching network.
For all $l \in \mathbb{N}$, channels $c \in C$ and messages $m \in M_\bot$, if $l$ is the smallest value such that there exists $c' \in C$ with $\nextC{m}^l(c) = c'$ and $\target(c') = m$, then $N(c,m) = l+1$.
\end{lemma}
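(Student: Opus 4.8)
The plan is to prove the statement by induction on $l$, unfolding the recursive definition of $N(c,m)$ one layer at a time while tracking how the minimality condition shifts from $c$ to its next channel.

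First I would observe that the hypothesis forces $m \in M$ rather than $m = \bot$: since $\target(c') \in N$ for every channel $c'$ and $\bot \notin N$, no $l$ can satisfy $\target(\nextC{m}^l(c)) = m$ when $m = \bot$. Hence for $m = \bot$ the premise is never met and the implication holds vacuously, so I may assume $m \in M$ throughout. For the base case $l = 0$, the condition reads $\target(\nextC{m}^0(c)) = \target(c) = m$, using $\nextC{m}^0(c) = c$. Since $m = \target(c)$, the second clause of Definition~\ref{def:N} gives $N(c,m) = 1 = 0 + 1$, as required.

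For the inductive step I assume the statement for $l$ and suppose $l+1$ is the smallest value with $\target(\nextC{m}^{l+1}(c)) = m$. Because $l+1 \geq 1$ and $l+1$ is minimal, the value $0$ does not satisfy the condition, i.e. $\target(c) \neq m$; together with $m \neq \bot$ this means the third clause of Definition~\ref{def:N} applies, so $N(c,m) = 1 + N(\nextC{m}(c), m)$. I then apply the induction hypothesis to the channel $\nextC{m}(c)$. The key reindexing observation is that $\nextC{m}^k(\nextC{m}(c)) = \nextC{m}^{k+1}(c)$ for all $k$, so that $\target(\nextC{m}^l(\nextC{m}(c))) = \target(\nextC{m}^{l+1}(c)) = m$, while for every $k < l$ we have $\target(\nextC{m}^k(\nextC{m}(c))) = \target(\nextC{m}^{k+1}(c)) \neq m$ by minimality of $l+1$ for $c$. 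Thus $l$ is the smallest value witnessing the condition for $\nextC{m}(c)$, and the induction hypothesis yields $N(\nextC{m}(c), m) = l + 1$. Substituting back gives $N(c,m) = 1 + (l+1) = (l+1) + 1$, closing the induction.

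The main obstacle is exactly this reindexing in the inductive step: I must verify not only that $l$ satisfies the target condition for the shifted channel $\nextC{m}(c)$, but that it is the \emph{smallest} such value, which requires transferring minimality from $c$ to $\nextC{m}(c)$ through the identity $\nextC{m}^k(\nextC{m}(c)) = \nextC{m}^{k+1}(c)$. Correctness of the network is what guarantees that $N(c,m)$ is well-defined in the first place, since it ensures the recursion in Definition~\ref{def:N} terminates, so it is used implicitly to make the statement meaningful; Lemma~\ref{lem:next-vs-nextC} is not needed here, as existence of $l$ is already part of the hypothesis.
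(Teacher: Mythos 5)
Your proof is correct and takes essentially the same route as the paper's: induction on $l$, unfolding the recursive definition of $N$ via its third clause, and transferring minimality of $l+1$ for $c$ to minimality of $l$ for $\nextC{m}(c)$ through the identity $\nextC{m}^k(\nextC{m}(c)) = \nextC{m}^{k+1}(c)$. You are in fact slightly more explicit than the paper in two places it glosses over: discharging the vacuous case $m = \bot$, and deriving $\target(c) \neq m$ from the minimality of $l+1$ before applying the third clause.
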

\begin{proof}
We proceed by induction on $l$.
\begin{itemize}
    \item $l = 0$. Fix $c \in C$ and $m \in M_\bot$ such that $\nextC{m}^0(c) = c'$ with $\target(c') = m$. By definition of $\nextC{m}^0$, $c = c'$, hence $\target(c) = m$, and by definition of $N$, $N(c,m) = 1 = 0+1$.
    \item $l = k+1$. Fix $c \in C$ and $m \in M_\bot$ such that $k+1$ is the smallest value such that there exists $c' \in C$ with $\nextC{m}^{k+1}(c) = c'$ and $\target(c') = m$. Let $c'$ be such.
    By definition, $\nextC{m}^{k+1}(c) = \nextC{m}^{k}(\nextC{m}(c)) = \nextC{m}^{k}(\route(\target(c),m)) = c'$.
    Note that $k$ is the smallest value such that $\nextC{m}^{k}(\route(\target(c),m)) = c'$, otherwise this would contradict that $k+1$ is the smallest such value for channel $c$. Therefore, according to the induction hypothesis, $N(\route(\target(c),m)) = k + 1$, since $m \neq \bot$ and $m \neq \target(c)$, $N(c,m) = 1 + N(\route(\target(c),m),m) = 1 + k + 1 = l + 1$. \qedhere
\end{itemize}
\end{proof}
%}

\begin{lemma}\label{lemma:N}
Let $\N = (N,M,C,rout)$ be a correct packet switching network, then for all channels $c \in C$ and messages $m \in M_\bot$, there exists $l \in \mathbb{N}$ such that $N(c,m) = l$.
\end{lemma}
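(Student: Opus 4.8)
The plan is to show that the recursive definition of $N(c,m)$ in Definition~\ref{def:N} always bottoms out, and the key insight is that correctness of the network (Definition~\ref{def:correctness}) forbids the routing from cycling indefinitely, so the ``otherwise'' branch can only be unfolded finitely often. I would split on whether $m = \bot$ or $m \in M$. If $m = \bot$, then $N(c,\bot) = 0$ directly by the first case of Definition~\ref{def:N}, so $l = 0$ witnesses the claim and there is nothing more to do.

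The substantive case is $m \in M$. Here the idea is to exhibit a concrete natural number for which Lemma~\ref{lem:nextC-vs-N} applies, which then hands us the value of $N(c,m)$. First I would set $n = \target(c)$ and invoke correctness: there exists $k \geq 0$ with $\nexthop{m}^k(n) = m$. Since $\target(c) = n$, Lemma~\ref{lem:next-vs-nextC} rewrites this equality as $\target(\nextC{m}^k(c)) = m$. Consequently the set of indices $\{ k \geq 0 \mid \target(\nextC{m}^k(c)) = m \}$ is nonempty.

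By the well-ordering of $\mathbb{N}$, this set has a least element $l$. Writing $c' = \nextC{m}^l(c)$, we then have $\target(c') = m$ with $l$ the smallest index enjoying this property, so the hypotheses of Lemma~\ref{lem:nextC-vs-N} are exactly met. That lemma yields $N(c,m) = l+1$, and since $l+1 \in \mathbb{N}$ this establishes the existence claim.

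I do not anticipate any serious obstacle: essentially all of the work has already been discharged in Lemmas~\ref{lem:next-vs-nextC} and~\ref{lem:nextC-vs-N}, and this statement merely stitches them together with the correctness hypothesis. The only point requiring care is verifying the side-condition of Lemma~\ref{lem:nextC-vs-N}, namely that $l$ is genuinely the \emph{smallest} index with $\target(\nextC{m}^l(c)) = m$; this is precisely what the appeal to well-ordering secures, so no further argument is needed.
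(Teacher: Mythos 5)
Your proof is correct and follows essentially the same route as the paper's: dispose of $m = \bot$ via the first clause of Definition~\ref{def:N}, use correctness plus Lemma~\ref{lem:next-vs-nextC} to obtain a least index $l$ with $\target(\nextC{m}^l(c)) = m$, and conclude $N(c,m) = l+1$ by Lemma~\ref{lem:nextC-vs-N}. The only cosmetic difference is that you take the minimum after transferring to $\nextC{m}$ while the paper picks the minimal $k$ for $\nexthop{m}$ first; these coincide by Lemma~\ref{lem:next-vs-nextC}.
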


\begin{proof}
Fix $c \in C$ and $m \in M_\bot$. Note that if $m = \bot$, then $N(c,m) = 0$, so the result follows immediately.
Now, assume that $m \neq \bot$. Since the network is correct, there must be some $k \in \mathbb{N}$ such that $\nexthop{m}^k(\target(c)) = m$. Pick the smallest such $k$.
According to Lemma~\ref{lem:next-vs-nextC}, $\nexthop{m}^k(\target(c)) = \nextC{m}^k(c)$. 
Hence there exists channel $c'$ such that $\nextC{m}^k(c) = c'$ with $\target(c') = m$ by definition of $\nextC{m}^k(c)$. Also, there is no $l < k$ such that $\target(\nextC{m}^l(c)) = m$,  since otherwise we would have a contradiction.
Therefore, according to Lemma~\ref{lem:nextC-vs-N}, $N(c,m) = k+1$.\qedhere
\end{proof}
We use this property to show that, from a given state in a packet switching network, if we only execute \emph{process} or \emph{receive} steps, the number of steps that can be taken is finite.

\begin{lemma}\label{lem:finite-process-or-receive-steps}
Let $\N = (N,M,C,rout)$ be a correct packet switching network with $\KSN=(S,I,\to,L)$ its semantics, then
\[
\forall s \in S \colon \exists s' \in S \colon s \to_{pr}^* s' \wedge s' \not \to_{pr}
\]
i.e., the number of possible steps of type $\to_{pr}$, from state $s$, is bounded.
\end{lemma}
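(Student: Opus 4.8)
The plan is to exhibit a natural-number-valued measure on states that strictly decreases with every $\to_{pr}$ step. Since a strictly decreasing sequence of natural numbers is necessarily finite, any maximal sequence of $\to_{pr}$ steps starting in $s$ must terminate in a state $s'$ with $s \to_{pr}^* s'$ and $s' \not\to_{pr}$, which is exactly the claim. Concretely, I would take as measure the total number of hops that the messages currently in the network still have to travel, namely
\[
\Phi(s) = \sum_{c \in C} N(c, \pi_c(s)).
\]
By Lemma~\ref{lemma:N} each summand $N(c, \pi_c(s))$ is a well-defined natural number (this is the only place where correctness of $\N$ is used), so $\Phi(s) \in \mathbb{N}$.

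The core of the argument is to show that $s \to_{pr} s'$ implies $\Phi(s') = \Phi(s) - 1$, by case analysis on the kind of step. For a receive step $s \to_r s'$, a message $m$ with $\target(c) = m$ leaves channel $c$; by Definition~\ref{def:N} the contribution of $c$ drops from $N(c,m) = 1$ to $N(c,\bot) = 0$, and no other channel changes, so $\Phi$ decreases by exactly one. For a process step $s \to_p s'$, a message $m$ moves from channel $c$ (with $\target(c) = n \neq m$) to $c' = \route(n,m)$, which was empty; note $c \neq c'$ since $c$ is occupied while $c'$ is empty. Here I would invoke the recursive clause of Definition~\ref{def:N}: because $m \neq \bot$ and $m \neq \target(c)$, we have $N(c,m) = 1 + N(\route(\target(c),m),m) = 1 + N(c',m)$. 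Before the step the two affected channels contribute $N(c,m) + N(c',\bot) = 1 + N(c',m)$, and afterwards they contribute $N(c,\bot) + N(c',m) = N(c',m)$; all other channels are unchanged, so $\Phi$ again drops by exactly one.

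With the decrease established, the conclusion is immediate: if $\Phi(s) = k$, then no $\to_{pr}$ sequence from $s$ can have more than $k$ steps, so a maximal such sequence exists and its final state $s'$ witnesses $s \to_{pr}^* s' \wedge s' \not\to_{pr}$.

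I expect the process-step case to be the only genuine obstacle. The work there is to line up the forwarding target $c' = \route(\target(c),m)$ with the channel that appears inside the recursive definition of $N(c,m)$, and to confirm that the side conditions $m \neq \bot$ and $m \neq \target(c)$ triggering that recursive clause are precisely the premises of the $\to_p$ rule (together with $c \neq c'$ so that the two contributions are separate summands). The receive case and the final termination argument are routine once the measure $\Phi$ and its monotonicity are in place.
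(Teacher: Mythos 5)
Your proposal is correct and matches the paper's own proof essentially verbatim: the paper uses the same weight $wt(s) = \sum_{c \in C} N(c, \pi_c(s))$, justifies its well-definedness via Lemma~\ref{lemma:N}, and performs the same case analysis on process and receive steps to show the weight strictly decreases. Your observation that the decrease is exactly one is a slight sharpening, but the argument is the same.
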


\begin{proof}
We prove that the number of possible steps of type $\to_{pr}$, starting from state $s \in S$, is finite. This means that eventually it will not be possible to do a $\to_{pr}$ step anymore.

Using $N(c,m)$ from Definition~\ref{def:N} we define the weight of a state $s \in S$ as follows:
\begin{equation*}
    wt(s) = \sum_{c \in C} N(c, \pi_{c}(s))
\end{equation*}
The weight captures the total number of steps required such that all the messages currently in the network can reach their destination. Note that $N$ is well-defined according to Lemma~\ref{lemma:N}, hence $wt$ is well-defined.

We now prove for all $s, s' \in S$ that if $s \to_{pr} s'$, then $wt(s') < wt(s)$.
Fix $s, s' \in S$ such that $s \to_{pr} s'$. Note that $wt(s) = \sum_{c \in C} N(c, \pi_{c}(s))$.
We distinguish two cases:
\begin{itemize}
    \item $s \to_{p} s'$. Then there must be $m \in M$ and $c, c' \in C$ such that $\pi_c(s) = m$, $\route(\target(c),m) = c'$ and $\pi_{c'}(s) = \bot$, and $\pi_{c}(s') = \bot$, $\pi_{c'}(s') = m$, and for all $c'' \in C \setminus \{ c,c' \}$, $\pi_{c''}(s) = \pi_{c''}(s')$.
    Let $m$, $c$ and $c'$ be such.
    
    Note that by definition of $N$, $N(c,m) = 1 + N(\route(\target(c),m),m) = 1 + N(c',m)$. Therefore, $N(c,\pi_c(s)) + N(c', \pi_{c'}(s)) = 1 + N(c,\pi_c(s')) + N(c', \pi_{c'}(s'))$; also $N(c'', \pi_{c''}(s)) = \pi_{c''}(s')$ for all $c \in C \setminus \{ c', c'' \}$, hence $wt(s) = \sum_{c \in C}N(c, \pi_c(s)) > \sum_{c \in C}N(c, \pi_c(s')) = wt(s')$.
    
    \item $s \to_{r} s'$. Then there must be $m \in M$, $c \in C$ such that $\pi_c(s) = m$ and $\target(c) = m$ and $\pi_{c}(s') = \bot$, and for all $c' \in C \setminus \{ c \}$, $\pi_{c'}(s) = \pi_{c'}(s')$. Let $m$ and $c$ be such.
    
    Note that $N(c, \pi_{c}(s)) = 1$, $N(c, \pi_{c}(s')) = 0$, and for all $c' \in C \setminus \{c\}$, $N(c', \pi_{c'}(s)) = N(c', \pi_{c'}(s'))$. Hence, we have $wt(s) = \sum_{c \in C}N(c, \pi_c(s)) > \sum_{c \in C} N(c, \pi_c(s')) = wt(s')$.
\end{itemize}

So, the weight of the state decreases on every transition taken in $\to_{pr}$. Note that it follows immediately from the definition of $N$ that, if there is a $\to_{pr}$ transition from state $s$, then for some channel $c$ and message $\pi_c(s)$, $N(c,\pi_c(s)) > 0$.
Therefore, the number of $\to_{pr}$ steps is finite.
Hence, for all states $s$ in $K$, there is a state $s'$ such that $s \to_{pr}^* s'$ such that $s' \not \to_{pr}$. \qedhere
\end{proof}

At this point, we can finally formalize the correspondence between weak and local deadlocks.
We first prove that a weak deadlock is also a local deadlock.

\begin{theorem}\label{prop:weak-is-local-deadlock}
Let $\N = (N,M,C,rout)$ be a correct packet switching network and $\KSN=(S,I,\to,L)$ its semantics. Then we have
\[
W(\KSN) \subseteq L(\KSN)
\]
\end{theorem}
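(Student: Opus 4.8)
The plan is to show that for a weak deadlock $s$ the set of channels occupied in $s$ is \emph{frozen}: no such channel ever changes its content along any path $s \to^* s'$. Since $s \in S \setminus I$, at least one channel is occupied, and any such channel $c$ then witnesses $s \in L_c(\KSN) \subseteq L(\KSN)$.

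First I would record two consequences of $s \not\to_{pr}$ for the occupied channels. Let $O = \{c \in C \mid \pi_c(s) \neq \bot\}$, which is nonempty because $s \notin I$. Fix $c \in O$ with $\pi_c(s) = m$. Since $s \not\to_r$, a receive on $c$ is impossible, so $\target(c) \neq m$. Since $s \not\to_p$, the process step forwarding $m$ out of $c$ to $c' = \route(\target(c), m)$ is impossible; as $\target(c) \neq m$, the only way for it to be blocked is $\pi_{c'}(s) \neq \bot$, so the next-hop channel $c'$ also lies in $O$. In other words, every occupied channel is non-receivable and ``waits for'' another occupied channel, the circular-wait pattern familiar from the global-deadlock example in Example~\ref{example:global-deadlock}.

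The core step is then an invariant, proved by induction on the length $k$ of a path $s \to^k s'$: every $c \in O$ satisfies $\pi_c(s') = \pi_c(s)$. The base case $k = 0$ is trivial. For the inductive step, write $s \to^k s'' \to s'$ and use the induction hypothesis that all channels in $O$ still carry their original (nonempty) content in $s''$, then case-split on the last transition. A $\to_s$ step only writes into a channel that is empty in $s''$, so it cannot touch any $c \in O$. A $\to_r$ step emptying a channel $d$ requires $\target(d) = \pi_d(s'')$; were $d \in O$, then $\pi_d(s'') = \pi_d(s)$ with $\target(d) \neq \pi_d(s)$ by the observation above, a contradiction, so $d \notin O$. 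A $\to_p$ step out of $d$ into $d'$ requires $d'$ empty in $s''$; if $d \in O$, then its next hop $d' = \route(\target(d), \pi_d(s))$ is in $O$ by the observation, hence nonempty in $s''$ by the induction hypothesis, a contradiction, so $d \notin O$; and since the target $d'$ must be empty in $s''$, it is not in $O$ either. The self-loop case changes nothing. In every case no channel of $O$ is modified, closing the induction.

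The main obstacle, and the reason the argument needs the full invariant rather than a single-step check, is that a weak deadlock can still perform $\to_s$ steps, and such a step may enable a later $\to_r$ (as in Example~\ref{example:weak-deadlock}, where sending into $c_5$ creates a receivable message). So it is \emph{not} true that no process or receive step ever occurs; I only need that none of them ever acts on an originally-occupied channel. The two facts that make this go through are that $\to_s$ only fills empty channels, so the occupied set can only grow and never release a frozen channel, and that the circular-wait observation guarantees the next-hop channel required to process any $c \in O$ is itself occupied and, by the induction hypothesis, stays occupied. With the invariant established, for any $c \in O$ and any $s'$ with $s \to^* s'$ we get $\pi_c(s') = \pi_c(s) \neq \bot$, hence $s \in L_c(\KSN)$, and therefore $W(\KSN) \subseteq L(\KSN)$.
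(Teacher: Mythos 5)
Your proposal is correct and follows essentially the same route as the paper's proof: identify the nonempty set of occupied channels, observe that each is non-receivable and waits on an occupied next-hop channel, and prove by induction on path length that these channels are frozen. Your case analysis in the inductive step is in fact spelled out more completely than the paper's (which only explicitly mentions the self-loop and receive cases), but the underlying argument is the same.
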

\begin{proof}
Fix arbitrary $s \in W(\KSN)$. From the definition of $W(\KSN)$, we observe that $s \not \in I$ and $s' \not \to_{pr}$.
%and $\nexists s' \in S$ such that $s \neq s'$ and $s \to_{pr} s'$.
Let $C' = \{ c \in C \mid \pi_c(s) \neq \bot \}$ be the set of non-empty channels in state $s$.
Since $s \not \in I$, $C' \neq \emptyset$.

Observe that for all $c \in C'$, $\pi_c(s) \neq \target(c)$, and $\route(\target(c),\pi_c(s)) \in C'$ from the definitions of $\to_p$ and $\to_r$, since $s \not \to_{pr}$.

Next we show that for all $s' \in S$ such that $s \to^* s'$, for all $c \in C'$, $\pi_c(s') = \pi_c(s)$. We proceed by induction.
If $s \to^0 s'$, then $s = s'$ and the result follows immediately.
Now, assume there exists $s''$ such that $s \to^n s'' \to s'$.
According to the induction hypothesis, for all $c \in C'$, $\pi_{c}(s'') = \pi_{c}(s)$. Fix arbitrary $c \in C'$.
It follows from our observations that $\pi_c(s'') \neq \target(c)$ and $\route(\target(c),\pi_c(s'')) \in C'$, hence $\route(\target(c),\pi_c(s'')) \neq \bot$.
Therefore, the only possible transitions are a self-loop in which $s'' \to s'$ with $s'' = s'$, or a transition $\to_r$, in which case $\pi_c(s') = \pi_c(s'')$ according to the definition of $\to_r$.

Hence, it follows that $s \in L_c(\KSN)$ for all $c \in C'$, and since $c'$ is non-empty, $s \in L(\KSN)$. So $W(\KSN) \subseteq L(\KSN)$. \qedhere
\end{proof}
The following example shows that generally not $L(\KSN) \subseteq W(\KSN)$.

\begin{example}\label{example:reachable-local-is-not-weak}
Consider the packet switching network with $N = M = \{ 1,2,3,4 \}$ and $C = \{ c_1, c_2, c_3, c_4, c_5 \}$ shown on the right.\footnote{This is the same network as in Example~\ref{example:weak-deadlock}, but with $c_5 = 1$ instead of $c_5 = \bot$.} Note that this configuration is reachable by applying the following \emph{send} steps in any order:

\noindent
\begin{minipage}{.70\textwidth}
from node $1$ to node $3$, from node $2$ to node $4$, from node $2$ to node $1$, from node $3$ to node $1$, and from node $4$ to node $2$. This results in the configuration we show.
This is a local deadlock for channels $c_1$ through $c_4$. Note, however, that node $1$ can receive the message from channel $c_5$, so this is not a weak deadlock.
\end{minipage}
\hfill
\begin{minipage}{.31\textwidth}
\begin{tikzpicture}[scale=2,auto]
    \foreach \pos/\name in {{(0,1)/1}, {(1,1)/2}, {(1,0)/3},{(0,0)/4}}
        \node[vertex] (\name) at \pos {$\name$};
    \foreach \source/ \dest /\weight in {1/2/c1=3}
        \path[arrow] (\source) -- node[weight,above] {$\weight$} (\dest);
    \foreach \source/ \dest /\weight in {3/4/c3=1}
        \path[arrow] (\source) -- node[weight,below] {$\weight$} (\dest);
    \foreach \source/ \dest /\weight in {2/1/c5=1}
        \path[arrow] (\source) edge[bend left] node[weight,below] {$\weight$} (\dest);
    \foreach \source/ \dest /\weight in {2/3/c2=4, 4/1/c4=2}
        \path[arrow] (\source) -- node[weight,left] {$\weight$} (\dest);
    \foreach \vertex in {1,2,3,4}
        \path[vertex] node[selected vertex] at (\vertex) {$\vertex$};
    \end{tikzpicture}
\end{minipage}
\end{example}

The essence of a local deadlock is a cycle of nodes, each of which is waiting for an outgoing channel to become free. This suggests that from a local deadlock, we can construct a weak deadlock by removing all messages that do not play a role in such a cycle. This is what we prove in the following theorem.

\begin{theorem}\label{prop:reachable-local-is-weak-deadlock}
Let $\N = (N,M,C,rout)$ be a correct packet switching network and $\KSN=(S,I,\to,L)$ its semantics. Then we have
\[
L(\KSN) \neq \emptyset \implies W(\KSN) \neq \emptyset
\]
\end{theorem}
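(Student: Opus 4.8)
The plan is to exploit Lemma~\ref{lem:finite-process-or-receive-steps}, which guarantees that from any state only finitely many $\to_{pr}$ steps can be taken, so that every state admits a $\to_{pr}$-maximal successor. Assuming $L(\KSN) \neq \emptyset$, I would fix some $s \in L(\KSN)$; by Definition~\ref{def:localdeadlock} there is a channel $c \in C$ with $s \in L_c(\KSN)$, which means $\pi_c(s) \neq \bot$ and $\pi_c(s') = \pi_c(s)$ for every $s'$ with $s \to^* s'$. Applying Lemma~\ref{lem:finite-process-or-receive-steps} to $s$ then yields a state $t$ with $s \to_{pr}^* t$ and $t \not\to_{pr}$, and I claim that this $t$ is exactly the weak deadlock we want.

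Two facts remain to be verified, and both are short. The condition $t \not\to_{pr}$ holds by the very choice of $t$, so the only genuine obligation is $t \notin I$. Since $\to_{pr}^* \subseteq \to^*$ we have $s \to^* t$, and because $s \in L_c(\KSN)$ this forces $\pi_c(t) = \pi_c(s) \neq \bot$. Thus at least channel $c$ is non-empty in $t$, so $t$ is not an initial state (in $I$ all channels are $\bot$), and Definition~\ref{def:weakdeadlock} gives $t \in W(\KSN)$, whence $W(\KSN) \neq \emptyset$. Intuitively, the $\to_{pr}^*$ reduction drives the network to a configuration in which no process or receive step is possible, while the deadlocked channel $c$ (and the cycle of blocked channels that sustains it) necessarily survives, which is precisely the ``removal of messages that do not play a role in the cycle'' described before the theorem.

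The genuinely hard work is not in this theorem but in the supporting Lemma~\ref{lem:finite-process-or-receive-steps}: the entire argument hinges on $\to_{pr}$ being well-founded, which is where correctness of the network (through the weight function $wt$ built from $N(c,m)$) is actually used. Granting that lemma, the only new observation needed here is that a deadlocked channel stays filled along any transition sequence, so any $\to_{pr}$-maximal state reachable from $s$ cannot be initial. A more constructive alternative would be to follow the chain $c_0 = c$, $c_{k+1} = \route(\target(c_k), \pi_{c_k}(s))$, show each $c_k$ is again deadlocked and non-empty (otherwise a receive or process step would eventually empty $c$, contradicting $s \in L_c(\KSN)$), extract a cycle by finiteness of $C$, and define a weak deadlock by emptying every channel outside that cycle. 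I expect this explicit construction to require noticeably more bookkeeping than the reduction-based proof, so I would present the argument via Lemma~\ref{lem:finite-process-or-receive-steps}.
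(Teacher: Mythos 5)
Your proof is correct and is essentially identical to the paper's: both fix $s \in L_c(\KSN)$, invoke Lemma~\ref{lem:finite-process-or-receive-steps} to obtain a $\to_{pr}$-maximal successor $t$, and use $\pi_c(t) = \pi_c(s) \neq \bot$ to rule out $t \in I$, concluding $t \in W(\KSN)$. The alternative cycle-extraction construction you sketch is not needed, and the paper likewise does not pursue it.
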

\begin{proof}
Fix arbitrary $s \in L(\KSN)$. We show that from $s$ we can reach a state $s'$ such that $s' \in W(\KSN)$.
Since $s \in L(\KSN)$, there exists a channel $c \in C$ such that $s \in L_c(\KSN)$, hence  $\pi_c(s) \neq \bot$ and for all $s' \in S$ such that $s \to^* s'$, we have $\pi_c(s') = \pi_c(s)$. According to Lemma~\ref{lem:finite-process-or-receive-steps}, there exists $s'$ such that $s \to_{pr}^* s'$ and $s' \not \to_{pr}$. Pick such $s'$. Since $s \to^* s'$, we have $\pi_c(s') = \pi_c(s)$ and $\pi_c(s') \neq \bot$. Note that since $\pi_c(s') \neq \bot$, $s' \not \in I$. Hence $s' \in W(\KSN)$, so $W(\KSN) \neq \emptyset$. \qedhere
\end{proof}

\section{Proof of Concept Implementation}
\label{section:implementation}
In this section we present a proof-of-concept implementation of the theory formalized in this paper. We translate packet switching networks into SMV, translating the different notions of deadlock to CTL. We use nuXmv~\cite{CCD+2014,CCD+2016} to find deadlocks in the models.
We describe examples to which these tools have been applied, and evaluate the results.
% follows the description of the structure of the generated nuXmv models, the translation of the notions of deadlock in CTL formulas, examples of this implementation and evaluation of results.
In the rest of this section, fix packet switching network  $\N = (N,M,C,\route)$, with Kripke structure $K_{\N}=(S,I,\to,L)$. For channels $c \in C$ and nodes $m \in M$, we use $\nextC{m}(c) = \route(\target(c),m)$ to denote the next channel for message $m$ when it is currently in $c$.

\subsection{An SMV model for packet switching networks}\label{subsection:model}

We sketch the translation of packet switching network $\N$ to the SMV format used by nuXmv.%\footnote{We here present a more abstract, mathematical representation of the nuXmv model. For reviewing purposes, a version using explicit syntax is included in the appendix.}
The SMV model consists of the  following parts:
\begin{itemize}
    \item \textsc{declarations}: $c_i \colon 1 \ldots |N|$. That is, the model has a variable $c_i$ for every channel $c_i \in C$. The value of the channels is in the range $0 \ldots |N|$. Note that $c_i = 0$ encodes $c_i = \bot$, i.e., the empty channel.
     \item \textsc{initialization:} $\bigwedge_{i=1}^{|C|} c_i = 0$. That is, initially all channels are empty.
     \item \textsc{transition relation:} The transition relation is the disjunction over all send, process and receive transitions that are specified as follows.
     For each $c_i \in C$ such that $\source(c_i) \in M$ (i.e. its source is terminal), and message $m \neq \source(c_i)$ that $c_i$ can insert into the network, we have a \textsc{send} transition:
         \[
         \begin{array}{llll}
             \texttt{case} & c_i = 0 & \colon \texttt{next}(c_i) = m \land \bigwedge_{j \neq i} \texttt{next}(c_j) = c_j; & \\
                           & \texttt{TRUE} & \colon \bigwedge_{i=1}^{|C|} \texttt{next}(c_i) = c_i; & \texttt{esac}
         \end{array}
         \] 
         
    For all channels $c_i, c_j \in C$ and messages $m$, such that $\nextC{m}(c_i) = c_j$, we have the following \textsc{process} transition:
         \[
         \begin{array}{ll}
             \texttt{case} \\
             ~~~~c_i = m \land c_j = 0 & \colon \texttt{next}(c_i) = 0 \land \texttt{next}(c_j) = m \land  \bigwedge_{k \not \in \{ i, j \}} \texttt{next}(c_k) = c_k; \\
             ~~~~\texttt{TRUE} & \colon \bigwedge_{i=1}^{|C|} \texttt{next}(c_i) = c_i; \\ 
             \texttt{esac}
         \end{array}
         \]
         
    For all channels $c_i \in C$ and messages $m$ such that $\target(c_i) = m$, we have the following \textsc{receive} transition:
         \[
         \begin{array}{llll}
             \texttt{case} & c_i = m & \colon \texttt{next}(c_i) = 0 \land \bigwedge_{j \neq i} \texttt{next}(c_j) = c_j; & \\
                           & \texttt{TRUE} & \colon \bigwedge_{i=1}^{|C|} \texttt{next}(c_i) = c_i; & \texttt{esac}
         \end{array}
         \]
\end{itemize}
In this encoding \texttt{next} returns the value of its argument in the next state.

\subsection{Deadlock formulas in CTL}
%We exploited the search of deadlocks using three different nuXmv models---different in the conditions to check---, based on our three different notions of deadlock, so that we created one model per each definition. Thus, we translated our three notions of deadlock in CTL formulas as conditions to check the presence, respectively, of global, local and weak deadlock.
To find deadlocks using nuXmv, we translate the properties to CTL.
For the sake of readability, we give the CTL formulas as defined for the Kripke structures. These formulas are easily translated into the explicit syntax of nuXmv.
%Let $c \in C$ and $m \in M$ such that channel $c$ contains message $m$; the next channel for $m$ in $c$ is defined as $\nextC{m}(c) = \route(\target(c),m)$.
%

%We first define the CTL formula for global deadlock.
\begin{definition}\label{def:ctl-global-deadlock}
The CTL formula for global deadlock is the following: 
\begin{multline*}
\mathsf{EF}(\neg(\bigvee_{c \in C} v_c=\bot \vee \bigvee_{c \in C}\bigvee_{m \in M}(\target(c) \neq m \wedge v_c = m \wedge v_{\nextC{m}(c)} = \bot) \vee 
\\ \bigvee_{c \in C}\bigvee_{m \in M}( \target(c)=m \wedge v_c = m))).
\end{multline*}
\end{definition}
This formula expresses that a state can be reached, in which non of the conditions required to take a transition holds. The disjuncts are the conditions for send, process and receive transitions, respectively.

%Let $(N,M,C,rout)$ be a packet switching network and $K=(S,I,\to,L)$ the Kripke structure that represents its semantics and $\nextC{m}(c) = \route(\target(c),m)$, 

%Considering, the structure and semantic of packet switching networks, and:
%\begin{easylist}
%\ListProperties(Hide=100, Progressive=3ex, Style*=-- )
%@ $\Sigma:$ set of conditions that have to be true for performing send steps,
%@ $\Pi:$ set of conditions for process steps,
%@ $P:$ set of conditions for receive steps,
%\end{easylist}
%the formula to check, referring to global deadlock, Definition \ref{def:globaldeadlock}, is the following:
%\[
%    CTLSPEC\ !EF\ (!( \bigvee_{\sigma \in \Sigma} \sigma \vee \bigvee_{\pi \in \Pi} \pi \vee \bigvee_{\rho \in P} \rho))
%\]
%the argument is the negation of the disjunction of all the conditions that have to be true in order to perform one step.

\begin{definition}
Local deadlock is defined in CTL as follows:
\[
    \bigvee_{c \in C} \bigvee_{m \in M}(\mathsf{EF}(\mathsf{AG}(v_c = m))).
\]
\end{definition}
This formula checks whether, a state can be reached in which, for some channel $c$ and message $m$, $c$ contains $m$, and $m$ can never be removed from $c$.

% for some channel and value, a state can be reached in which the channel contains that particular value

% The formula checks whether there exists the situation in which, eventually, at least one channel of the network will not be able to change value anymore.
%Let $(N,M,C,rout)$ be a packet switching network and $K=(S,I,\to,L)$ the Kripke structure that represents its semantics and $\nextC{m}(c) = \route(\target(c),m)$, 

%Considering, the structure and semantic of packet switching networks, and:
%\begin{easylist}
%\ListProperties(Hide=100, Progressive=3ex, Style*=-- )
%@ $\Gamma:$ set of channels that play a role for the defined $M$, set of terminals, accordingly to the routing conditions of choosing the shortest path, and $\gamma_i \in \Gamma$ where $i$ goes from 1 to $|\Gamma|$,
%@ $M_i \in M:$ defined for each channel $i$, the set of values the channel can take,
%\end{easylist}
%the formula to check, referring to local deadlock, Definition \ref{def:localdeadlock}, is as follows:
%\[
%    CTLSPEC\ !(\ \bigvee_{i=1}^{|\Gamma|} (  \bigvee_{m \in M_i}(EF(AG(\gamma_i = m))))).
%\]
%The formula checks whether there exists the situation in which, eventually, at least one channel of the network will not be able to change value anymore.

\begin{definition}
Weak deadlock is defined in CTL as follows:
\begin{multline*}
\mathsf{EF}(\bigvee_{c \in C}(v_c \neq \bot) \land \neg(\bigvee_{c \in C}\bigvee_{m \in M}(\target(c) \neq m \wedge v_c = m \wedge v_{\nextC{m}(c)} = \bot) \vee \\ \bigvee_{c \in C}\bigvee_{m \in M}( \target(c)=m \wedge v_c = m))).
\end{multline*}
\end{definition}
This formula expresses a non-initial state can be reached in which no \emph{process} or \emph{receive} transition is enabled. 
The conditions are the same as in Definition~\ref{def:ctl-global-deadlock}.
% $\bigvee_{c \in C}(v_c \neq \bot)$ is added to the formula for avoiding to reach a weak deadlock situation in the initial state, from which it is not possible to have a \emph{process} or \emph{receive} step.
%Let $(N,M,C,rout)$ be a packet switching network and $K=(S,I,\to,L)$ the Kripke structure that represents its semantics and $\nextC{m}(c) = \route(\target(c),m)$, 

%Considering the sets of conditions $\Sigma\ ,\Pi\ ,P $, from the structure used for the case of SMC-global deadlock, the formula to check, referring to weak deadlock, Definition \ref{def:weakdeadlock}, is of the following form:
%\[
%    CTLSPEC\ !EF\ (!(\bigvee_{\pi \in \Pi} \pi \vee \bigvee_{\rho \in P} \rho)).
%\]
%Similarly as the case of global deadlock, in this formula we consider the disjunctions of conditions to perform steps. In this case, these conditions are the ones related only to process or receive steps. 

%In the tool, we have avoided to stuck at the initial state where all the channels are free and thus, where it is not possible to have a process or receive step.
\begin{figure}[b]
    \centering
    \begin{subfigure}{.5\textwidth}
    \centering
      \begin{tikzpicture}[scale=1,auto]
    \foreach \pos/\name in {{(0,4)/1}, {(3,0)/10}, {(2,0)/11}, {(1,0)/12}, {(0,0)/13}, {(1,4)/2}, {(2,4)/3}, {(3,4)/4}, {(4,4)/5}, {(4,3)/6}, {(4,2)/7}, {(4,1)}/8, {(4,0)}/9, {(0,1)/14}, {(0,2)/15}, {(0,3)/16}, {(2,2)/17}}
        \node[vertex] (\name) at \pos {$\name$};
    \foreach \source/ \dest in {1/2, 2/3, 3/4, 3/17, 4/5, 4/6, 5/6, 6/7, 7/8, 7/17, 8/9, 8/10, 9/10, 10/11, 11/12, 11/17, 12/13, 13/14, 14/15, 15/16, 15/17, 16/1, 16/2, 17/3, 17/7, 17/11, 17/15}
        \path[arrow] (\source) -- (\dest);
\end{tikzpicture}
\captionof{figure}{Network as directed graph}
\label{fig:17}
    \end{subfigure}%
    \begin{subfigure}{.5\textwidth}
    \centering
      \begin{tikzpicture}[scale=0.95,auto]
    \foreach \pos/\name in {{(0,4)/1}, {(3,0)/10}, {(2,0)/11}, {(1,0)/12}, {(0,0)/13}, {(1,4)/2}, {(2,4)/3}, {(3,4)/4}, {(4,4)/5}, {(4,3)/6}, {(4,2)/7}, {(4,1)}/8, {(4,0)}/9, {(0,1)/14}, {(0,2)/15}, {(0,3)/16}, {(2,2)/17}}
        \node[vertex] (\name) at \pos {$\name$};
    \foreach \source/ \dest/ \weight in {15/17/11, 17/11/12, 11/12/13, 12/13/11, 13/14/11, 14/15/11}
       \path[arrow] (\source) -- node[weight] {$\weight$} (\dest);
    \foreach \source/ \dest in {1/2, 2/3, 3/4, 3/17, 4/5, 4/6, 5/6, 6/7, 7/8, 7/17, 8/9, 8/10, 9/10, 10/11, 15/16, 16/1, 16/2, 17/3, 17/7}
        \path[arrow] (\source) -- (\dest);
    \foreach \source/ \dest in {17/15, 11/17}
        \path[arrow] (\source) edge[bend left] (\dest);
    \foreach \vertex in {1,5,8,11,12,13,15}
        \path[vertex] node[selected vertex] at (\vertex) {$\vertex$};
\end{tikzpicture}
\captionof{figure}{Local deadlock $M=\{1,5,8,11,12,13,15\}$}
\label{fig:xMASLDex}
    \end{subfigure}
    \caption{17 nodes and 27 channels packet switching network}
    \label{fig:17PSN}
\end{figure}
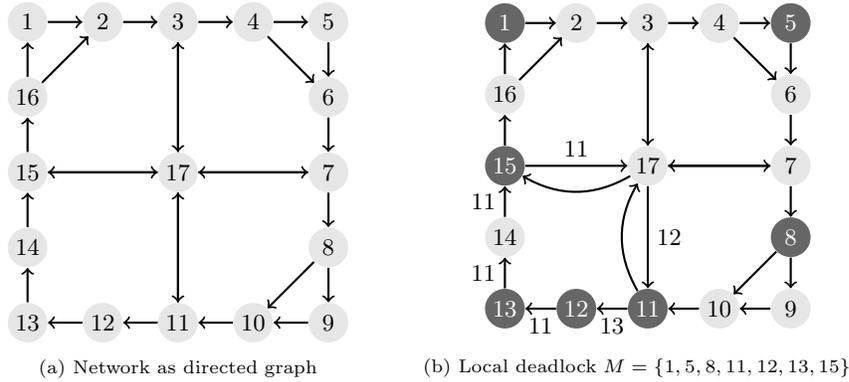

\subsection{Experiment}%\an{there is a lot of space here}
We evaluate our proof-of-concept implementation on a packet switching network that consists of 17 nodes and 27 channels. The network is shown in Figure \ref{fig:17}. The nodes are numbered consecutively from 1 to 17. Note that double-ended arrows represent pairs of channels; one channel per direction. The routing function used is the shortest path, which is unique for every pair of nodes $n$ and $n'$. 
We vary the set of terminals in this network, and determine for each of the notions of deadlock whether a deadlock exists. A timeout is set at 2 hours and 30 minutes.

The experiments were done using nuXmv 1.1.1, on a system running Windows 10 Home, 64 bit Intel(R) Core(TM) i7-7500U CPU @ 2.70GHz and 8GB of RAM.

\begin{table}[!t]
\centering
\caption{Deadlock and execution times (s) using nuXmv for the network in Fig.~\ref{fig:17}}
\label{tab:results17}
\begin{tabular}{@{}l|cr|cr|cr@{}}
\toprule
\textbf{Set of terminals M} & \multicolumn{2}{c|}{\textbf{Global}} & \multicolumn{2}{c|}{\textbf{Local}}                             & \multicolumn{2}{c}{\textbf{Weak}}                                        \\ 
                            & \textbf{dl}      & \textbf{time}            & \textbf{dl}               & \textbf{time}                              & \textbf{dl}               & \textbf{time}                                        \\ \midrule
\{2,4,6\}                   & d                & 0.74                      & d                         & 0.71                                        & d                         & 0.71                                                  \\
\{1,8,10\}                  & d                & 0.95                      & d                         & 0.83                                        & d                         & 0.96                                                  \\
\{5,12,14\}                 & d                & 0.92                      & d                         & 1.04                                       & d                         & 0.93                                                  \\
\{5,11,14\}                 &                  & 0.39                      &                           & 0.42                                        &                           & 0.42                                                  \\
\{11,13,15\}                &                  & 0.32                      &                           & 0.40                                        &                           & 0.42                                                  \\
\{1,5,9,13\}                &                  & 0.69                      &                           & 0.74                                        &                           & 0.70                                                  \\
\{1,3,5,15\}                &                  & 0.48                      &                           & 0.50                                        &                           & 0.58                                                  \\
\{3,7,11,15\}               &                  & 0.51                      &                           & 0.46                                        &                           & 0.50                                                  \\
\{1,2,3,4,5\}               &                  & 0.61                      &                           & 0.67                                        &                           & 0.66                                                  \\
\{11,12,13,15\}             &                  & 0.50                      & d                          & 0.76                                        & d 
          & 1.10                                                 \\
\{1,5,9,13,17\}             &                  & 0.57                      &                           & 0.74                                        &                           & 0.61                                                  \\
\{2,4,6,10,12\}             & d                & 37.55              & d                         & 21.20                               & d                         & 124.40                                   \\
\{3,7,11,15,17\}            &                  & 0.56                      &                           & 0.57                                        &                           & 0.61                                                  \\
\{2,4,7,10,12,15,17\}       &                  & 0.83                      &                           & 1.50                                       &                           & 1.10                                                 \\
\{1,5,8,11,12,13,15\}       &                  & 0.97                      & d                         & 32.58                                    & d                         & 7204.98          \\
\{1,5,9,11,12,13,15\}       &                  & 0.82                      & d                          & 62.50                          & d                             & 6132.20 \\
\{1,3,5,7,9,11,13,15,17\}   &                  & 1.13                     &                           & 2.10                                       &                           & 1.32                                                 \\
\{2,3,4,7,10,11,12,15,17\}  &                  & 1.04                     &                           & 1.89                                       &                           & 1.21                                                 \\
\{2,4,6,10,12,14\}          &                  & n/a    &                           & n/a                       &                           & n/a                                 \\
\{6,8,10,12,14,16\}         &                  & n/a     &                           & n/a                       &                           & n/a                                 \\
\{2,4,6,8,10,12,14,16\}     &                  & n/a     &                           & n/a                      &                           & n/a                                 \\ \bottomrule
\end{tabular}
\end{table}

\subsubsection{Results}
Table~\ref{tab:results17} lists the results. For each set of terminals and notion of deadlock, we report whether a deadlock is found in column `dl' (`d' means a deadlock was found) and the execution time (s) in column `time'; `n/a' indicates a timeout.
%We report, for each set of terminals and each notion of deadlock whether the tool found a deadlock---column `dl' contains value `d' if deadlock was found, and is empty otherwise---and the execution time---column `time' in seconds---.
%An entry `n/a` indicates a time out.

For the network considered in our experiments, finding global deadlocks is often fast, yet for larger instances it does time out. Finding local and weak deadlocks is often slower than finding global deadlocks. Generally, finding local deadlocks is faster than finding weak deadlocks.

\subsubsection{Discussion}
Table~\ref{tab:results17} shows there are sets of terminals $M$ such that it is deadlock free for all types of deadlock, contains all different types of deadlock, but also that there are instances in which there is no global deadlock, but weak and local deadlocks are found.
Observe that the results are also consistent with theory: for all instances in which a local deadlock is found, also a weak deadlock is reported (see Lemma~\ref{prop:reachable-local-is-weak-deadlock}). The results also show that examples with no global deadlock that do exhibit local deadlock are found in practice. This is consistent with Lemma~\ref{lem:global-vs-local-deadlock}. Figure~\ref{fig:xMASLDex} shows the local deadlock found for $M=\{1,5,8,11,12,13,15\}$. 

Note that execution times increase in particular for sets of terminals that require many channels in routing. This is consistent with our expectation: if more channels and more terminals are involved, the size of the reachable state space increases, which is also likely to increase the model checking time.

\section{Conclusions}\label{section:conclusions}
%In this work we investigate deadlocks in the field of packet switching networks. 
We formalized three different notions of deadlock in the field of packet switching networks, namely global, local and weak deadlock. We proved that a global deadlock is also a weak deadlock, and a weak deadlock is a local deadlock. Although a local deadlock is not necessarily a weak deadlock, from a local deadlock a weak deadlock can be constructed. Hence, a network has a local deadlock if and only if it has a weak deadlock.
We showed that presence of a local or weak deadlock does not imply the existence of a global deadlock.   %the connection between respectively, global and local deadlock, global and weak deadlock and finally local and weak deadlock with the formalization of a theorem.
Moreover, we showed how deadlocks in packet switching networks can be found using nuXmv.
%we translated packet switching networks into SMV and converted our different notions of deadlock to CTL formulas. We used nuXmv to find deadlocks in these models and we found results that are consistent with theory. \an{I tried to rephrase at least what we have done}
%\clearpage

\paragraph{Future work} In this paper we considered networks with deterministic routing functions. The work should be generalized to non-deterministic routing functions. Furthermore, scalability of the approach in the verification of (on-chip) interconnect networks should be evaluated further.

\bibliographystyle{splncs04}
\bibliography{paper}

\end{document}